\newcommand{\isdraft}{}
\newtheorem{theorem}{Theorem}
\newtheorem{proposition}{Proposition}
\newtheorem{lemma}{Lemma}
\theoremstyle{definition}
\newtheorem{definition}{Definition}
\newtheorem{protocol}{Protocol}
\renewcommand{\epsilon}{\varepsilon}
\newcommand{\Fold}{\ensuremath{\mathsf{Fold}}}
\newcommand{\Span}{\ensuremath{\mathrm{Span}}}
\newcommand{\RS}{\ensuremath{\mathsf{RS}}}
\newcommand{\Cut}{\ensuremath{\mathsf{Cut}}}
\newcommand{\Cay}{\ensuremath{\mathsf{Cay}}}
\newcommand{\loops}{\ensuremath{\mathsf{petals}}}
\newcommand{\desequilibre}{\mu}
\newcommand{\Card}[1]{\ensuremath{|{#1}|}}
\newcommand{\size}[1]{\ensuremath{|{#1}|}}
\newcommand{\V}{\calV}
\renewcommand{\P}{\calP}
\newcommand{\edges}[1]{\tilde{#1}}
\newcommand{\RIM}{RIM}
\newcommand{\decrease}{A}
\newcommand{\reject}{R}
\newcommand{\invalidfold}{N}
\newcommand{\dinvalidfold}{N'}
\newcommand{\fprime}{f'}
\newcommand{\repetitionparameter}{m}
\newcommand{\numberofedgeschecked}{t}
\newcommand{\Hugo}[1]{{\footnotesize\color{orange!75!black}\textit{Hugo}~: #1}}
\newcommand{\vadisparaitre}[1]{#1}
\newcommand{\nouveau}[1]{{#1}}
\newcommand{\draftmath}[1]{\[#1\]}
\newcommand{\emptycommand}{}
  \renewcommand{\Hugo}[1]{}
  \renewcommand{\vadisparaitre}[1]{}
  \renewcommand{\draftmath}[1]{$#1$}
\begin{document}

\title{Interactive Oracle Proofs of Proximity to Codes on Graphs}


\author[1,2]{Hugo Delavenne}
\author[3,2]{Tanguy Medevielle}
\author[1,2]{Élina Roussel}
\affil[1]{LIX, École Polytechnique, Institut Polytechnique de Paris}
\affil[2]{INRIA}
\affil[3]{IRMAR, Université de Rennes}
\date{}




\maketitle

\begin{abstract}
  We design an Interactive Oracle Proof of Proximity (IOPP) for codes on graphs inspired by the FRI protocol. The soundness is significantly improved compared to the FRI, the complexity parameters are comparable, and there are no restrictions on the field used, enabling to consider new codes to design code-based SNARKs.

\end{abstract}

\section{Introduction}
\subsection{Scientific context}
Designing efficient Succinct Non-interactive ARguments of Knowledge (SNARKs) has become an important field in cryptography.
A SNARK is a cryptographic proof system that enables a computationally powerful prover to demonstrate the validity of a computational statement to a computationally weak verifier.
The SNARKs used in practice rely on the arithmetization of a computation to an algebraic problem, and on proving efficiently and interactively that the problem has a solution.
One of the main approaches relies on proximity tests to error-correcting codes as algebraic problem, and specifically to Reed-Solomon codes. Since they are built as evaluation of polynomials, they provide useful algebraic properties related to the arithmetization.
However Reed-Solomon codes are not locally-testable, meaning that testing proximity to the code requires access to a significant proportion of a word.
Interactive Oracle Proofs of Proximity (IOPP) \cite{BCS16,BBHR18a} were introduced to vercome this isue by enabling testing the proximity to a Reed-Solomon code while only reading a few coordinates.

An IOPP is an $r$-round interaction between a prover $\P$ and a verifier $\V$ in which $\P$ aims to convince $\V$ that, for a given word $f\in\bbF^n$, code $C\subseteq\bbF^n$ and parameter $\delta\in[0,1]$,
\begin{equation}
  \label{eq:protocol-claim}
  \Delta_H(f,C)\leq\delta\text,
\end{equation}
where $\Delta_H$ is the relative Hamming distance.
The Verifier receives oracles to the messages sent by $\P$. Then $\V$ only sends randomness in order to keep the protocol public coin, and to be able to apply a Fiat-Shamir transformation to turn the protocol into a non-interactive one \cite{FS87}, \cite[Section 6]{BCS16}.
The prover and verifier are modeled as $r+1$ algorithms $\P_0,...,\P_r$ and $\V_0,...,\V_r$ representing their behavior over rounds.
Each one takes as arguments the input of the protocol and the history of the interaction, and outputs the message to send to the other party.
Since the verifier does not access the whole input word, but only a chosen random part, there is a probability for it to accept at the end of the protocol even though (\ref{eq:protocol-claim}) is not satisfied. This probability is called the \emph{soundness}. On the other hand, when $f\in C$, the probability that the Verifier accepts is called the \emph{completeness}, and we require it to be $1$.

Codes on graphs were first introduced by Tanner \cite{Tan81}.
As in the celebrated construction by Sipser and Spielman \cite{SS96}, we consider codes whose coordinates are indexed by the edges of a graph.
Given a $n$-regular graph $\Gamma=(V,E)$ and a base code $C_0\subseteq\bbF^n$, the code on $\Gamma$ built on $C_0$ is the space of functions $f:E\to\bbF$ such that each ``local view'', i.e. the edges around a given vertex, is a codeword of $C_0$:
\[
  \forall v\in V,(f(e))_{e=(v,v')\in E}\in C_0\text.
\]
Expander graphs have been studied for their ability to provide codes with low locality, so one can test the proximity to them by only accessing a small proportion of the coordinates.
It led to the discovery of a family of codes with constant rate, constant minimal distance and constant locality \cite{DELLM21}.
However, these codes do not reach practical complexity parameters yet. Moreover their construction is very resctrictive and requires specific algebraic objects which are much more involved than simply polynomials like for Reed-Solomon codes, hence they are not easily suitable for arithmetization.
Unlike these cited constructions, we do not try to build a family of codes on graphs with a constant regularity, but we allow it to grow with the size of the graph.

\subsection{Techniques and results}

We adapt ideas from the FRI protocol \cite{BBHR18a} to create a new folding technique that reduces testing the proximity to a graph to testing the proximity to a graph with twice less vertices.
The specificity of our protocol lies in the preservation of the arity between the original and small graph.
A folding creates several ``loop edges'', referred to as ``\emph{petals}'', which ensures that the local views are preserved.
Repeating this folding technique produces a graph with a single vertex and only petals, referred to as the ``\emph{flower}''.
The protocol is thus called \emph{flowering}.
The graphs considered use a Reed-Solomon code as base code.

Our main improvement lies in the soundness of our protocol. It achieves a lower commit soundness factor compared to the FRI. Moreover, the soundness of the Flowering protocol remains valid up to the covering radius whereas it is only a conjecture for the FRI protocol \cite[Conjecture 2.3]{BGKS20}.
Furthermore, our protocol only require that the field size is larger than the logarithm of the length of the code to be able to define a Reed-Solomon code as base code, while the FRI requires much more structure on the field. \Hugo{Une référence sur le fait de chercher des bons corps ?}
\nouveau{However, the family of graphs on which we apply our protocol, defined in Section \ref{sec:cayley}, have a positive rate but an $o(1)$ minimum distance.}
We compare the parameters of the FRI protocol \cite{BBHR18a,BKS18,BCIKS23} for testing the proximity to a Reed-Solomon code of length $N$ and dimension $K$, with our Flowering protocol on the codes defined in Section \ref{sec:cayley}, of length $N$ and dimension $K$.
Table \ref{tab:comparison} presents the soundnesses and their domain of validity, as well as their complexity parameters.
For the FRI protocol, we rewrite the soundnesses from \cite{BKS18,BCIKS23} in a form that can be compared with this paper. We use \cite[Theorem 5.2]{BCIKS23} instead of \cite[Theorem 4.4]{BKS18} it in the proof of \cite[Theorem 7.2]{BKS18}.

\begin{table*}[th!]
  \centering
    \begin{tabular}{rc|c|c}
      & Protocol & FRI \cite{BBHR18a,BKS18,BCIKS23} & Flowering (this paper) \\ \hline
      \multirow{5}{*}{\rotatebox{90}{\small{}Complexities}} & Prover & $<8N$ & $<3N$ \\
      & Verifier & $<2\repetitionparameter\log K$ & $<4\repetitionparameter\numberofedgeschecked\log N$ \\
      & Query & $2\repetitionparameter\log K$ & $<2\repetitionparameter\numberofedgeschecked\log N$ \\
      & Length & $<N$ & $<N$ \\
      & Rounds & $\log K$ & $<\log N$ \\ \hline && \\[-3.5mm]
      \multirow{2}{*}{\rotatebox{90}{}} & Soundness & $\displaystyle\frac{K^2\log K}{(2\epsilon)^7\Card{\bbF}}+\left(1-\delta\right)^\repetitionparameter$ & $\displaystyle\frac{\log N}{\epsilon\Card{\bbF}}+\left(1-\frac{\numberofedgeschecked}{\log N}(\delta-\epsilon\log N)\right)^\repetitionparameter$ \\
      & Validity & $\delta< 1-\sqrt{\frac{K}{N}}-\epsilon$ & $\delta< 1-\frac{K}{N}$
  \end{tabular}
  \vspace*{2mm}
  \caption{
    Comparison of the soundnesses and complexity parameters,
    \nouveau{using the codes with constant rate and $o(1)$ minimum distance defined in Section \ref{sec:cayley} for the Flowering,}
    and a mixed soundness from \cite{BKS18,BCIKS23} for the FRI protocol, where $\repetitionparameter$ and $\numberofedgeschecked$ are repetition parameters, and $\epsilon$ is arbitrary.
  }
  \label{tab:comparison}
\end{table*}

\section{Definitions}

Fix a finite field $\bbF$ of cardinal greater than $n$ and $x_1,...,x_n\in\bbF$ pairwise distinct elements.
Denote the Reed-Solomon code evaluated on $\{x_1,...,x_n\}$ of dimension $k$ by $\RS[n,k]$.

\begin{definition}[Regular indexed multigraph (\RIM)]
  A \emph{$n$-regular indexed multigraph} $\Gamma=(V,E)$ is given by a set of vertices $V$ and a function $E:V\times[n]\to V$ such that for $v\in V$ and $\ell\in[n]$, the vertex $E(v,\ell)$ is the neighbor of $v$ through edge indexed $\ell$, and such that $E$ satisfies the well-definedness property: $\forall \ell, \forall v\in V, E(E(v,\ell),\ell)=v$.

  Denote $\edges{E}$ the quotient of $V\times[n]$ by the equivalence relation $\sim_E$ defined by $(v,\ell)\sim_E(v',\ell')$ iff $\ell=\ell'$, and either $E(v,\ell)=v'$ or $v=v'$.
  Denote $\loops(\Gamma):=\{(v,\ell)\in V\times[n]\mid E(v,\ell)=v\}$.
\end{definition}


\begin{definition}[Word on graph, code on graph]
  Let $\Gamma=(V,E)$ be a $n$-\RIM. A \emph{word $\edges{f}$ on $\Gamma$} is a function $\edges{E}\to\bbF$.
  We denote $W(\Gamma,\bbF)$ the set of functions $f:V\times[n]\to\bbF$ such that for any $v,v',\ell$, if $(v,\ell)\sim_E(v',\ell)$ then $f(v,\ell)=f(v',\ell)$.
  A word on $\Gamma$ can equivalently be viewed as a function in $W(\Gamma,\bbF)$. Since this formalism will be more convenient here, we will use it instead.

  For $f\in W(\Gamma,\bbF)$ and $v\in V$, we denote by $f(v,\cdot)$ the vector $(f(v,1),...,f(v,n))$, and we denote $f(v,X)$ the degree $<n$ polynomial such that $f(v,x_i)=f(v,i)$ for $i=1,...,n$.

  Define the code $\calC[\Gamma,k]$ on $\Gamma$ as $\{f\in W(\Gamma,\bbF)\mid \forall v\in V,f(v,\cdot)\in \RS[n,k]\}$.
\end{definition}


\begin{definition}[Graph isomorphism]
  Let $\Gamma=(V,E)$ and $\Gamma'=(V',E')$ be $n$-\RIM{}. An \emph{isomorphism} between $\Gamma$ and $\Gamma'$ is a bijection $\varphi:V\to V'$ such that $\forall v\in V, \forall \ell\in [n], \varphi(E(v,\ell))=E'(\varphi(v),\ell)$.
\end{definition}

\begin{definition}[Cut-graph, cut-word]
  Let $\Gamma=(V,E)$ be a $n$-\RIM{}. For $V'\subseteq V$, $\Cut[\Gamma,V']$ is defined as the $n$-\RIM{} $(V',E')$ where
  \[E':(v,\ell)\mapsto
    \begin{cases}
      E(v,\ell)&\text{if }E(v,\ell)\in V'\\
      v&\text{otherwise.}
    \end{cases}\]

  Let $f\in W(\Gamma,\bbF)$. For $V'\subseteq V$, we define the word $\Cut[f,V']$ on $\Cut[\Gamma,V']$ as the restriction of $f$ to $\Cut[\Gamma,V']$: $\forall v\in V',\forall\ell\in[n],\Cut[f,V'](v,\ell):=f(v,\ell)$.

\end{definition}

\begin{definition}[Flowering cut]
  Let $\Gamma=(V,E)$ be a $n$-\RIM{}.
  Let $V',V''\subseteq V$ be a partition of $V$.
  If there exists an isomorphism $\varphi:\Cut[\Gamma,V']\to\Cut[\Gamma,V'']$, then $F=(V',\varphi)$ is said to be a \emph{flowering cut}.

  Denote $\pi_\varphi:V\to V'$ the projection such that $\pi_\varphi(v)=v$ if $v\in V'$ and $\varphi^{-1}(v)$ otherwise.
\end{definition}

We define a folding notion, similar to \cite{BLNR20}.

\begin{definition}[Folding]
  Let $\Gamma=(V,E)$ be a regular well-defined \RIM{}. Let $f\in W(\Gamma,\bbF)$. Let $F=(V',\varphi)$ be a flowering cut and denote $V'':=V\setminus V'$. Denote $f':=\Cut[f,V']$ and $f'':=\Cut[f,V'']$. We define the \emph{folding} of $f$ on the cut $F$ by $\alpha\in\bbF$ as the following word of $W(\Cut[\Gamma,V'],\bbF)$
  \[\Fold_F[f,\alpha]:(v,\ell)\mapsto f'(v,\ell)+\alpha f''(\varphi(v),\ell)\text.\]
\end{definition}

When it is clear from context, we denote $\Fold$ that operator.

\begin{definition}[Blossoming graph sequence]
  \label{definition:blossoming-graph-sequence}
  A sequence of $n$-\RIM{} $(\Gamma_0=(V_0,E_0),...,\Gamma_r=(V_r,E_r))$ is said to be \emph{blossoming} if $\Gamma_r$ has exactly $1$ vertex, and for any $i=1,...,r$, there exists a flowering cut $F_i=(V_{i},\varphi_i)$ such that $\Gamma_i=\Cut[\Gamma_{i-1},V_i]$.
\end{definition}

We introduce a distance called vertex distance, more suitable for the local views. It is coarser than the Hamming distance.

\begin{definition}[Vertex distance, Hamming distance]
  Let ${\Gamma=(V,E)}$ be a $n$-\RIM{}. Let $f,f'\in W(\Gamma,\bbF)$. We define the relative vertex distance between $f$ and $f'$, denoted $\Delta_V$, by
  \[\Delta_V(f,f'):=\frac{1}{\Card{V}}\Card{\{v\in V\mid f(v,\cdot)\neq f'(v,\cdot)\}}\text.\]
  We reformulate the relative Hamming distance between $f$ and $f'$, denoted $\Delta_H$, by
  \[
    \Delta_H(f,f'):=\frac{1}{\Card{\edges{E}}}\Card{\left\{\overline{(v,\ell)}\in\edges{E}\mid f(v,\ell)\neq f'(v,\ell)\right\}}\text.
  \]
\end{definition}

\begin{proposition}
  \label{proposition:comparison-distances}
  Let $\Gamma=(V,E)$ be a $n$-\RIM{}. Let $f,f'\in W(\Gamma,\bbF)$.
  For $v\in V$ and $\ell\in[n]$, let $\size{\overline{(v,\ell)}}$ be the cardinal of the equivalence class of $(v,\ell)$ by $\sim_E$,
  let $m:=\max_{v\in V}\sum_{\ell\in[n]}\frac{1}{\size{\overline{(v,\ell)}}}$.
  Then $\Delta_V(f,f')\geq\frac{\Card{\edges{E}}}{m\Card{V}}\Delta_H(f,f')$.
\end{proposition}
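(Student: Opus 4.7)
The plan is to reduce the inequality to showing $\Card{D}\le m\cdot\Card{D_V}$, where $D:=\{\overline{(v,\ell)}\in\edges{E}:f(v,\ell)\neq f'(v,\ell)\}$ is the set of disagreeing edges and $D_V:=\{v\in V:f(v,\cdot)\neq f'(v,\cdot)\}$ is the set of disagreeing vertices, since $\Card{D}/\Card{\edges{E}}=\Delta_H(f,f')$ and $\Card{D_V}/\Card{V}=\Delta_V(f,f')$. The argument will be a double count of incidences between disagreeing edges and disagreeing vertices.

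First I would observe that whenever $\overline{(v,\ell)}\in D$, both endpoints $v$ and $E(v,\ell)$ lie in $D_V$ (they may coincide in the petal case). Indeed, the equivalence class is either the singleton $\{(v,\ell)\}$ when $E(v,\ell)=v$ or the pair $\{(v,\ell),(E(v,\ell),\ell)\}$ otherwise, and well-definedness of the word ensures that the coordinate $\ell$ of the local view at each endpoint differs between $f$ and $f'$, so these endpoints have distinct local views.

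Next, using the identity $1=\size{e}\cdot\tfrac{1}{\size{e}}$ for each $e\in\edges{E}$, I would write
\[
  \Card{D}=\sum_{e\in D}\sum_{(v,\ell)\in e}\frac{1}{\size{e}}=\sum_{v\in D_V}\sum_{\substack{\ell\in[n]\\\overline{(v,\ell)}\in D}}\frac{1}{\size{\overline{(v,\ell)}}},
\]
where exchanging the order of summation uses that only pairs $(v,\ell)$ with $v\in D_V$ can appear, by the previous observation. Dropping the constraint $\overline{(v,\ell)}\in D$ on the inner sum and applying the definition of $m$ gives $\Card{D}\le m\cdot\Card{D_V}$, and dividing by $\Card{\edges{E}}$ yields the desired bound.

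I do not expect any serious obstacle; the whole argument is a short combinatorial incidence count. The only subtle point is to handle petals ($\size{\overline{(v,\ell)}}=1$) and ordinary edges ($\size{\overline{(v,\ell)}}=2$) uniformly, which is automatic once the weights $1/\size{\overline{(v,\ell)}}$ appear, as they exactly compensate for the multiplicity with which each edge is seen from its incident vertices.
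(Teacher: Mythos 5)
Your proof is correct and is essentially the paper's argument in set-theoretic rather than indicator-function notation: the weighted incidence count $\Card{D}=\sum_{v,\ell}\mathds{1}[f(v,\ell)\neq f'(v,\ell)]/\size{\overline{(v,\ell)}}$ together with the bound $\sum_{\ell\in[n]}1/\size{\overline{(v,\ell)}}\leq m$ at each disagreeing vertex is exactly the double count the paper performs. Your explicit remark that both endpoints of a disagreeing edge are disagreeing vertices (using that $f,f'\in W(\Gamma,\bbF)$ are constant on equivalence classes) is the same fact the paper uses implicitly in the inequality $d(v)\geq d(v,\ell)$.
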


\begin{proof}
  For $v\in V$ and $\ell\in[n]$, let $d(v):=1$ if $f(v,\cdot)\neq f'(v,\cdot)$ and $0$ otherwise,
  let $d(v,\ell):=1$ if $f(v,\ell)\neq f'(v,\ell)$ and $0$ otherwise.
  Since $d(v)m\geq\sum_{\ell\in[n]}\frac{d(v)}{\size{\overline{(v,\ell)}}}\geq\sum_{\ell\in[n]}\frac{d(v,\ell)}{\size{\overline{(v,\ell)}}}$,
  we have
  $\Delta_V(f,f')=\frac{1}{\Card{V}}\sum_{v\in V}d(v)\geq\frac{1}{m\Card{V}}\sum_{v,\ell}\frac{d(v,\ell)}{\size{\overline{(v,\ell)}}}$.
  Moreover, $\Delta_H(f,f')=\frac{1}{\Card{\edges{E}}}\sum_{v,\ell}\frac{d(v,\ell)}{\size{\overline{(v,\ell)}}}=\frac{m\Card{V}}{\Card{\edges{E}}}\frac{1}{m\Card{V}}\sum_{v,\ell}\frac{d(v,\ell)}{\size{\overline{(v,\ell)}}}$, which gives the result.
\end{proof}

Denote $\desequilibre(\Gamma)$ the ratio $\frac{\Card{\edges{E}}}{m\Card{V}}$.
As a corollary of Proposition \ref{proposition:comparison-distances}, if each vertex of $\Gamma$ has the same amount of loops, then $\desequilibre(\Gamma)=1$ and thus $\Delta_V(f,f')\geq\Delta_H(f,f')$.
In Section \ref{sec:cayley}, this will be satisfied.

\section{General properties}

We adapt the lower bound on the dimension from \cite[Theorem 1]{Tan81} to our construction.

\begin{proposition}[Lower bound on the dimension]
  Let $\Gamma=(V,E)$ be a $n$-\RIM{}.
  Then $K:=\dim \calC[\Gamma,k]\geq (k-n/2)\Card{V}+\Card{\loops(\Gamma)}/2$.
\end{proposition}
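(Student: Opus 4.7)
The plan is to adapt the classical Tanner dimension count: view $\calC[\Gamma,k]$ as the subspace of $W(\Gamma,\bbF)$ cut out by one Reed--Solomon constraint per vertex, and bound its codimension by the total number of independent linear constraints imposed.

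First I would compute $\dim W(\Gamma,\bbF)$. Elements of $W(\Gamma,\bbF)$ are exactly those maps $V\times[n]\to\bbF$ that are constant on each equivalence class of $\sim_E$, so $\dim W(\Gamma,\bbF)=\Card{\edges{E}}$. To count the classes I use the well-definedness property $E(E(v,\ell),\ell)=v$: if $(v,\ell)$ is not a loop and $v'=E(v,\ell)\neq v$, then $E(v',\ell)=v$, hence the class $\overline{(v,\ell)}=\{(v,\ell),(v',\ell)\}$ has size exactly $2$; loop classes have size $1$. Partitioning the $n\Card{V}$ pairs accordingly yields
\[
  \Card{\edges{E}} \;=\; \Card{\loops(\Gamma)} + \frac{n\Card{V}-\Card{\loops(\Gamma)}}{2} \;=\; \frac{n\Card{V}+\Card{\loops(\Gamma)}}{2}.
\]

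Second, $\calC[\Gamma,k]$ is the subspace of $W(\Gamma,\bbF)$ defined by the $\Card{V}$ local constraints ``$f(v,\cdot)\in\RS[n,k]$''. Each local constraint is linear, and since $\RS[n,k]$ has codimension $n-k$ in $\bbF^n$, it contributes at most $n-k$ independent linear equations on $W(\Gamma,\bbF)$. Summing over vertices and using the standard inequality $\dim(U\cap H)\geq \dim U - \mathrm{codim}\, H$ iterated once per vertex gives
\[
  \dim \calC[\Gamma,k] \;\geq\; \Card{\edges{E}} - (n-k)\Card{V}.
\]

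Finally, substituting the edge count yields
\[
  \dim \calC[\Gamma,k] \;\geq\; \frac{n\Card{V}+\Card{\loops(\Gamma)}}{2} - (n-k)\Card{V} \;=\; (k-n/2)\Card{V} + \Card{\loops(\Gamma)}/2,
\]
which is the claimed bound. There is no real obstacle here; the only point that requires care is the edge count, which crucially relies on well-definedness of $E$ to guarantee that non-loop equivalence classes have size exactly two.
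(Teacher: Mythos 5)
Your proof is correct and follows essentially the same route as the paper: both bound the codimension of $\calC[\Gamma,k]$ inside $W(\Gamma,\bbF)\cong\bbF^{\Card{\edges{E}}}$ by the $(n-k)\Card{V}$ rows of the aggregated Reed--Solomon parity checks and then substitute $\Card{\edges{E}}=\frac{n\Card{V}+\Card{\loops(\Gamma)}}{2}$. Your version merely makes explicit the edge count and its reliance on well-definedness, which the paper leaves implicit in its final equality.
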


\begin{proof}
  By aggregating the $\Card{V}$ parity check matrices of all the vertices for the code $\RS[n,k]$, one obtains a parity check matrix $H$ for $\calC[\Gamma,k]$, with $\Card{\edges{E}}$ columns and $(n-k)\Card{V}$ rows. Thus $\dim \calC[\Gamma,k]=\dim\ker H\geq \Card{\edges{E}}-(n-k)\Card{V}=(k-n/2)\Card{V}+\Card{\loops(\Gamma)}/2$.
\end{proof}

Proposition \ref{proposition:commit-soundness} is the graph analog of \cite[Theorem 4.4]{BKS18}.

\begin{proposition}[Commit soundness]
  \label{proposition:commit-soundness}
  Let $\epsilon>0$. Let $\Gamma=(V,E)$ be a $n$-\RIM{}.
  Let $F=(V',\varphi)$ be a flowering cut.
  Denote $C:=\calC[\Gamma,k]$, $\Gamma'=\Cut[\Gamma,V']=(V',E')$ and $C':=\calC[\Cut[\Gamma,V'],k]$.
  Let $f\in W(\Gamma,\bbF)$. Then
  \begin{equation*}\label{eq:commit-soundness}
    \underset{\alpha\in\bbF}{\Pr}\left[\Delta_V(\Fold_F[f,\alpha],C')<\Delta_V(f,C)-\epsilon\right]\leq\frac{1}{\epsilon\Card{\bbF}}\text.
  \end{equation*}
\end{proposition}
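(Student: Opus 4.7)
The plan is to imitate the Schwartz--Zippel/Markov argument behind the FRI commit soundness \cite[Theorem~4.4]{BKS18}, exploiting the linearity of $\RS[n,k]$. For each $v\in V'$ the local view
\[\Fold_F[f,\alpha](v,\cdot)=f'(v,\cdot)+\alpha f''(\varphi(v),\cdot)\]
is an affine function of $\alpha$ taking values in $\bbF^n$. Because $\RS[n,k]$ is a linear subspace, the set $\{\alpha\in\bbF:\Fold_F[f,\alpha](v,\cdot)\in\RS[n,k]\}$ is either all of $\bbF$---when both $f'(v,\cdot)$ and $f''(\varphi(v),\cdot)$ already lie in $\RS[n,k]$---or has at most one element. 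This dichotomy is the engine of the proof.

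Let $G:=\{v\in V':f'(v,\cdot)\in\RS[n,k]\text{ and }f''(\varphi(v),\cdot)\in\RS[n,k]\}$, $B:=V'\setminus G$, and $T_\alpha:=\{v\in V':\Fold_F[f,\alpha](v,\cdot)\in\RS[n,k]\}$. The dichotomy gives $G\subseteq T_\alpha$ for every $\alpha$ and $\sum_{\alpha\in\bbF}\Card{T_\alpha\cap B}\leq\Card{B}$, whence Markov applied to $\Card{T_\alpha\cap B}$ yields
\[\underset{\alpha\in\bbF}{\Pr}\bigl[\Card{T_\alpha\cap B}\geq\epsilon\Card{V'}\bigr]\leq\frac{\Card{B}/\Card{\bbF}}{\epsilon\Card{V'}}\leq\frac{1}{\epsilon\Card{\bbF}}\text.\]
Outside this bad event $\Card{V'\setminus T_\alpha}>\Card{B}-\epsilon\Card{V'}$, and the easy inequality $\Delta_V(\Fold_F[f,\alpha],C')\geq\Card{V'\setminus T_\alpha}/\Card{V'}$---any $g\in C'$ matching $\Fold_F[f,\alpha]$ at some $v$ forces the local view to belong to $\RS[n,k]$---produces $\Delta_V(\Fold_F[f,\alpha],C')>\Card{B}/\Card{V'}-\epsilon$ with probability at least $1-1/(\epsilon\Card{\bbF})$.

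To conclude I would verify the structural inequality $\Card{B}/\Card{V'}\geq\Delta_V(f,C)$ by exhibiting an explicit $g\in C$ with $\Delta_V(f,g)\leq\Card{B}/\Card{V'}$: set $g(u,\cdot):=f(u,\cdot)$ for every $u\in G\cup\varphi(G)$ (legitimate since those local views already lie in $\RS[n,k]$ and $f\in W(\Gamma,\bbF)$ ensures well-definedness inside this set), then extend $g(u,\cdot)\in\RS[n,k]$ on the remaining $2\Card{B}$ vertices so as to match the values forced on the shared edges with $G\cup\varphi(G)$; this gives $\Delta_V(f,g)\leq 2\Card{B}/\Card{V}=\Card{B}/\Card{V'}$. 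The main obstacle is precisely this extension: when a vertex $u\in V\setminus(G\cup\varphi(G))$ has more than $k$ neighbours in $G\cup\varphi(G)$ the interpolation problem for $g(u,\cdot)$ is overdetermined, and arguing that the forced values are nonetheless compatible with some element of $\RS[n,k]$---by exploiting the MDS property of $\RS[n,k]$ together with the well-definedness of $f$ on $\Gamma$ and the flowering-cut structure---is the delicate point where the care of the argument lies.
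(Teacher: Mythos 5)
Your dichotomy-plus-Markov core is exactly the paper's argument in different clothes. Your $B$ is the paper's $T':=\pi_\varphi(T)$ where $T:=\{v\in V\mid f(v,\cdot)\notin\RS[n,k]\}$, your $T_\alpha\cap B$ is its $V_\alpha$, and your averaging bound $\sum_{\alpha}\Card{T_\alpha\cap B}\leq\Card{B}$ is its double count of $\sum_{\alpha\in A}\Card{V_\alpha}$; the paper derives the at-most-one-$\alpha$ claim from polynomial degrees rather than from linearity of $\RS[n,k]$, which is equivalent. Your steps establishing $\Pr\bigl[\Card{T_\alpha\cap B}\geq\epsilon\Card{V'}\bigr]\leq 1/(\epsilon\Card{\bbF})$ and $\Delta_V(\Fold_F[f,\alpha],C')>\Card{B}/\Card{V'}-\epsilon$ outside that event are correct and complete.

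The divergence, and the gap, is your final step. The codeword you propose to build does not exist in general: take $u\in V\setminus(G\cup\varphi(G))$ with $f(u,\cdot)\notin\RS[n,k]$ all of whose neighbours lie in $G\cup\varphi(G)$; then edge-consistency forces every coordinate of $g(u,\cdot)$ to equal $f(u,\cdot)$, so no completion inside $\RS[n,k]$ exists, and $\min_{g\in C}\Delta_V(f,g)$ can strictly exceed $\Card{B}/\Card{V'}$. The MDS property cannot rescue an interpolation problem whose forced data is itself a non-codeword. The paper never constructs a codeword: it writes $\Card{T}=\Delta_V(f,C)\Card{V}$ ``by definition of the vertex distance,'' i.e.\ it reads the vertex distance to the code as the fraction of vertices whose local view fails the $\RS[n,k]$ test, and then uses that each fibre $\pi_\varphi^{-1}(v')$ meets $T$ in at most two points to get $\Card{T'}\geq\Card{T}/2=\Delta_V(f,C)\Card{V'}$ --- exactly the inequality $\Card{B}\geq\Delta_V(f,C)\Card{V'}$ you need. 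Under that local-test reading of $\Delta_V(f,C)$ your first three steps already finish the proof and no extension argument is required; under the reading $\min_{g\in C}\Delta_V(f,g)$, only the easy direction $\Card{T}\leq\Delta_V(f,C)\Card{V}$ is free, and the direction actually used is precisely the extension problem you isolated. So you have correctly located the one nontrivial accounting step, but the route you propose for it (explicit codeword construction) fails in general, whereas the paper resolves it by how it interprets the distance rather than by a construction.
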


\begin{proof}
  Denote $\delta:=\Delta_V(f,C)$ and assume that $\delta>0$. Let $T:=\{v\in V\mid f(v,\cdot)\notin \RS[n,k]\}$, and $T':=\pi_\varphi(T)$ be the vertices of $\Gamma'$ whose $\Fold$ is built from at least one vertex of $T$.
  We have that $\forall v'\in T', \Card{\pi_\varphi^{-1}(v')\cap T}\leq 2$, hence
  \begin{equation}\label{eq:TgeqtwoTprime}
    \Card{T}=\sum_{v'\in T'}\Card{\pi_\varphi^{-1}(v')\cap T}\leq 2\Card{T'}\text.
  \end{equation}
  Furthermore, by definition of the vertex distance, $\Card{T}=\delta\Card{V}$, since $\Card{V}=2\Card{V'}$ and by (\ref{eq:TgeqtwoTprime}), we have
  \begin{equation}\label{eq:inequality-T-Tprime}
    \Card{T'}\geq\frac{\Card{T}}{2}=\frac{\delta\Card{V}}{2}=\delta\Card{V'}\text.
  \end{equation}
  For $\alpha\in\bbF$, denote $V_\alpha:=\{v'\in T'\mid \Fold[f,\alpha](v',\cdot)\in\RS[n,k]\}$.
  Then
  \begin{align}
    &\Pr(\Delta_V(\Fold[f,\alpha],C')<\delta-\epsilon) \nonumber \\
    &= \Pr\left(\Card{\{v'\in V'\mid \Fold[f,\alpha](v',\cdot)\notin \RS[n,k]\}}<(\delta-\epsilon)\Card{V'}\right) \nonumber \\
    &= \Pr\left(\Card{\{v'\in T'\mid \Fold[f,\alpha](v',\cdot)\notin \RS[n,k]\}}<(\delta-\epsilon)\Card{V'}\right) \nonumber \\
    &= \Pr\left(\Card{V_\alpha}>\Card{T'}-(\delta-\epsilon)\Card{V'}\right) \nonumber \\
    &\leq \Pr\left(\Card{V_\alpha}>\epsilon\Card{V'}\right)\text, \label{eq:inequality-validnodes}
  \end{align}
  where (\ref{eq:inequality-validnodes}) is obtained by (\ref{eq:inequality-T-Tprime}).
  I.e., with $A:=\{\alpha\in\bbF\mid \Card{V_\alpha}>\epsilon\Card{V'}\}$,
  \begin{equation}\label{eq:inequality-fold-getting-closer-leq-A}
    \underset{\alpha}{\Pr}(\Delta_V(\Fold[f,\alpha],C')<\delta-\epsilon)\leq\frac{\Card{A}}{\Card{\bbF}}\text.
  \end{equation}

  We now provide a bound on $\Card{A}$.
  Let $v'\in T'$ and denote $A_{v'}:=\{\alpha\in\bbF\mid \Fold[f,\alpha](v',\cdot)\in \RS[n,k]\}$.
  Denote $\sum_{i=0}^da_iX^i:=\Cut[f,V'](v',X)$ and $\sum_{i=0}^db_iX^i:=\Cut[f,V\setminus V'](v',X)$, where $d$ is the maximum degree of $\Cut[f,V'](v',X)$ and $\Cut[f,V\setminus V'](v',X)$.
  Since $f\notin C$, $d\geq k$, and because
  \draftmath{\Fold[f,\alpha](v',X)=\sum_{i=0}^d(a_i+\alpha b_i)X^i\text,}
  there is at most one value $\alpha$ such that $\deg\Fold[f,\alpha](v',X)<d$, hence
  \begin{equation}\label{eq:cardAv}
    \Card{A_{v'}}\leq 1\text.
  \end{equation}

  On the one hand, by definition of $A$,
  \draftmath{
    \sum_{\alpha\in A}\sum_{v'\in T'}\mathds{1}_{A_{v'}}=\sum_{\alpha\in A}\Card{V_\alpha}>\epsilon\Card{A}\Card{V'}\text,
  }
  and on the other hand, by (\ref{eq:cardAv}),
  \draftmath{
    \sum_{v'\in T'}\sum_{\alpha\in A}\mathds{1}_{A_{v'}}= \sum_{v'\in T'}\Card{A_{v'}}\leq\Card{T'}\leq\Card{V'}\text.
  }

  Thus $\Card{A}\leq\frac{1}{\epsilon}$, and with (\ref{eq:inequality-fold-getting-closer-leq-A}) we obtain the result.
\end{proof}

\noindent
\nouveau{We achieve a better soundness than~\cite{BKS18} because $\Fold[f,\alpha]$ gets closer to $C'$ only if a local-view becomes an RS codeword, i.e. a linear combination of non-codewords is a codeword.}

\section{Flowering protocol}

Protocol \ref{protocol:flowering} is an IOPP inspired from the FRI protocol \cite{BBHR18a}. The Prover has access to the word $f$ on $\Gamma$, and the Verifier has oracle access to $f$. The Prover aims to convince the Verifier that $\Delta_H(f,\calC[\Gamma,k])\leq\delta$. For this, the Prover will successively reduce the problem to testing the proximity to smaller codes.

\subsection{Protocol and properties}

Let $(\Gamma_0=(V_0,E_0),...,\Gamma_r=(V_r,E_r))$ be a blossoming $n$-\RIM{} sequence on the flowering cuts $F_1=(V_1,\varphi_1),...,F_r=(V_r,\varphi_r)$. For $i=0,...,r$, let $C_i:=\calC[\Gamma_i,k]$. Note that $C_r=\RS[n,k]$ is the code on the singleton \RIM{} with $n$ loop vertices, which we call a flower.

\begin{protocol}[Flowering protocol]
  \label{protocol:flowering}
  The flowering protocol is composed of two phases: the commit phase and the query phase. There are two complexity parameters: the number $\repetitionparameter$ of repetitions of the query phase and the number $\numberofedgeschecked$ of edges that are checked.

  \textsc{Commit phase:}
  For $i$ from $1$ to $r$, the $\V_i$ sends $\alpha_{i-1}\overset{\$}{\leftarrow}\bbF$ to $\P$ and $\P_i$ gives to $\V$ oracle access to a word $f_i\in W(\Gamma_i,\bbF)$.

  \textsc{Query phase:}
  For $j\in[\repetitionparameter]$, $\calV_r$ picks $v_{0,j}\overset{\$}{\leftarrow}V_0$ and a random set $I_j\subseteq[n]$ of size $\numberofedgeschecked$.
  For $i\in[r]$, $\V_r$ computes $v_i:=\pi_{\varphi_{i-1}}(v_{i-1})$, and checks that
  \begin{equation*}
    \label{eq:query-test}
    \forall\ell\in I_j,\Fold[f_{i-1},\alpha_{i-1}](v_{i,j},\ell)=f_i(v_{i,j},\ell)
  \end{equation*}
  by making $2\numberofedgeschecked$ queries if $i-1=0$, or $\numberofedgeschecked$ if $i\geq 2$, to $f_{i-1}$, and $\numberofedgeschecked$ queries to $f_i$.
  Finally with $v_r$ the only vertex of $\Gamma_r$, $\calV_r$ checks that
  \[
    f_r(v_r,\cdot)\in\RS[n,k]\text.
  \]
  The Verifier accepts only if all checks pass.
\end{protocol}

\begin{theorem}[Complexity properties of the protocol]
  Protocol \ref{protocol:flowering} has the following complexity properties
  \begin{itemize}
  \item Prover complexity: $3\sum_{i=1}^r\Card{\edges{E_i}}<3n\Card{V_0}$;
  \item Verifier complexity: $4r\repetitionparameter\numberofedgeschecked$;
  \item Query complexity: $(2r+1)\repetitionparameter\numberofedgeschecked+n$;
  \item Round complexity: $r$;
  \item Randomness complexity: $r$ fields elements, $\repetitionparameter$ nodes and $\repetitionparameter$ subsets of $[n]$;
  \item Proof length: $\sum_{i=1}^r\Card{\edges{E_i}}<n\Card{V_0}$.
  \end{itemize}
\end{theorem}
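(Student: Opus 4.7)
The theorem consists of six counting bounds. My plan is to dispatch each by elementary bookkeeping, relying on two recurring ingredients: the halving identity $\Card{V_i} = \Card{V_{i-1}}/2$ (immediate from the fact that a flowering cut partitions $V_{i-1}$ into two isomorphic halves $V_i$ and $V_{i-1}\setminus V_i$), and the bound $\Card{\edges{E_i}} \leq n\Card{V_i}$ valid for any $n$-\RIM{}.

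I first handle \emph{proof length} and \emph{prover complexity} together. For each $i \in [r]$, the prover sends $f_i = \Fold_{F_i}[f_{i-1}, \alpha_{i-1}] \in W(\Gamma_i, \bbF)$, contributing $\Card{\edges{E_i}}$ field elements, and each coordinate of $f_i$ is obtained by the formula $f_{i-1}(v,\ell) + \alpha_{i-1} f_{i-1}(\varphi_i(v),\ell)$, which I charge at three field operations (two reads and one fused multiply-add). Summing over rounds and applying the geometric estimate $\sum_{i=1}^r \Card{\edges{E_i}} \leq n\Card{V_0}\sum_{i\geq 1}2^{-i} < n\Card{V_0}$ yields both bounds at once.

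Next I tackle \emph{query} and \emph{verifier complexity}. In repetition $j$, the verifier follows a chain $v_{0,j}, v_{1,j}, \ldots, v_{r,j}$ through successive projections, and at each round $i \in [r]$ tests $\numberofedgeschecked$ equations of the form $f_{i-1}(v_{i,j}, \ell) + \alpha_{i-1} f_{i-1}(\varphi_i(v_{i,j}), \ell) = f_i(v_{i,j}, \ell)$, each a priori needing two evaluations of $f_{i-1}$ and one of $f_i$. The only non-routine point, and the main (mild) obstacle, is an amortization: for $i \geq 2$ the value $f_{i-1}(v_{i-1,j}, \ell)$ was already queried during round $i-1$, and since $\pi_{\varphi_i}(v_{i-1,j}) = v_{i,j}$ forces $v_{i-1,j} \in \{v_{i,j}, \varphi_i(v_{i,j})\}$, it coincides with one of the two needed values. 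Thus round $1$ costs $2\numberofedgeschecked$ queries to $f_0$, and each subsequent round $i \geq 2$ costs only $\numberofedgeschecked$ fresh queries to $f_{i-1}$ plus $\numberofedgeschecked$ to $f_i$; summation gives $(2r+1)\numberofedgeschecked$ queries per repetition. Multiplying by $\repetitionparameter$ and adding the $n$ queries of the one-shot final check $f_r(v_r,\cdot) \in \RS[n,k]$ (not multiplied by $\repetitionparameter$, since $\Gamma_r$ has a unique vertex) yields $(2r+1)\repetitionparameter\numberofedgeschecked + n$. For the verifier's work, each fold equation can be checked in at most four field operations (multiplication, addition, comparison, plus a constant overhead for computing $v_{i,j}$ and $\varphi_i(v_{i,j})$), giving $4r\repetitionparameter\numberofedgeschecked$.

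The remaining bullets are immediate: \emph{round complexity} is $r$ by inspection of the commit phase; and \emph{randomness complexity} comes from the $r$ challenges $\alpha_0, \ldots, \alpha_{r-1}$, the $\repetitionparameter$ sampled vertices $v_{0,j}$, and the $\repetitionparameter$ sampled subsets $I_j \subseteq [n]$. Aside from the caching argument above, everything reduces to direct summation and the halving estimate.
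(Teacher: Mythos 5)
Your accounting is correct and matches the protocol's own explicit query counts (the $2\numberofedgeschecked$-versus-$\numberofedgeschecked$ caching of $f_{i-1}(v_{i-1,j},\ell)$ via $v_{i-1,j}\in\{v_{i,j},\varphi_i(v_{i,j})\}$ is exactly the amortization the protocol description builds in), and the halving bound $\sum_{i\ge 1}\Card{\edges{E_i}}<n\Card{V_0}$ is the intended estimate; the paper in fact states this theorem without proof, treating it as the routine bookkeeping you carried out. The only point worth flagging is that, like the theorem itself, your verifier-complexity count silently omits the cost of the final membership test $f_r(v_r,\cdot)\in\RS[n,k]$, which is $O(n)$ queries but more than a constant number of field operations.
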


\begin{theorem}
  \label{thm:protocol-parameters}
  Let $(\Gamma_0,...,\Gamma_r)$ be a blossoming $n$-\RIM{}.
  The following properties hold when running Protocol \ref{protocol:flowering} on a word $f\in W(\Gamma_0,\bbF)$ with $\repetitionparameter$ repetitions of the query phase by checking $\numberofedgeschecked$ edges, where the probabilities are taken over the Verifier's internal randomness.

  \begin{enumerate}
  \item Completeness: if $f\in C_0$ then there exists a prover $\P$ such that $\V$ accepts with probability $1$.
  \item Soundness: for any prover $\P$, $\V$ accepts with probability at most
    \[\underset{\epsilon>0}{\min}\left(\frac{r}{\epsilon\Card{\bbF}}+\left(1-\frac{\numberofedgeschecked}{n}\big(\desequilibre(\Gamma)\Delta_H(f,\calC[\Gamma_0,k])-r\epsilon\big)\right)^\repetitionparameter\right)\text.\]
  \end{enumerate}
\end{theorem}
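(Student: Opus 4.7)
The plan is to follow the standard commit-and-query IOPP structure. For completeness, let the honest prover set $f_i := \Fold_{F_i}[f_{i-1}, \alpha_{i-1}]$ at each round. A short induction shows $f_i \in C_i$: the local view of $f_i$ at each vertex $v' \in V_i$ is an $\bbF$-linear combination of two local views of $f_{i-1}$ (at $v'$ and at $\varphi_i(v')$), both in $\RS[n,k]$, and $\RS[n,k]$ is $\bbF$-linear. Therefore $f_r \in C_r = \RS[n,k]$, the final check passes, and all consistency checks hold by construction.

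For soundness, fix $\epsilon > 0$ and let $g_i := \Fold_{F_i}[f_{i-1}, \alpha_{i-1}]$ for $i \geq 1$ (with $g_0 = f_0 = f$) and $b_i := \Delta_V(f_i, g_i)$ (with $b_0 = 0$). Applying Proposition \ref{proposition:commit-soundness} at each round and union-bounding over the $r$ rounds, with probability at least $1 - r/(\epsilon \Card{\bbF})$ over the challenges $\alpha_0,\dots,\alpha_{r-1}$ we have $\Delta_V(g_i, C_i) \geq \Delta_V(f_{i-1}, C_{i-1}) - \epsilon$ for all $i$. Combining with the triangle inequality $\Delta_V(f_i, C_i) \geq \Delta_V(g_i, C_i) - b_i$ and telescoping yields $\Delta_V(f_r, C_r) \geq \Delta_V(f, C_0) - r\epsilon - \sum_{i=1}^r b_i$. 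Since acceptance forces $f_r \in C_r$, on this good event we conclude $\sum_{i=1}^r b_i \geq \Delta_V(f, C_0) - r\epsilon \geq \mu(\Gamma_0) \Delta_H(f, C_0) - r\epsilon$ by Proposition \ref{proposition:comparison-distances}.

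For the query phase, one repetition picks a uniform $v_{0,j} \in V_0$, which projects via the $2$-to-$1$ maps $\pi_{\varphi_{i-1}}$ to a uniform $v_{i,j} \in V_i$ at each level, together with a uniform size-$t$ subset $I_j \subseteq [n]$. If $v_{i,j}$ lies in the bad set $B_i := \{v : f_i(v,\cdot) \neq g_i(v,\cdot)\}$ (of relative size $b_i$), at least one of its $n$ edges is discrepant, so $I_j$ catches a discrepancy with probability at least $t/n$. Summing these per-level contributions along the chain and exponentiating by the $m$ independent repetitions should yield the announced factor $\bigl(1-(t/n)(\mu(\Gamma_0)\Delta_H(f,C_0)-r\epsilon)\bigr)^m$; minimizing over $\epsilon$ and adding the commit-soundness loss $r/(\epsilon\Card{\bbF})$ then closes the argument.

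The main obstacle is precisely the per-repetition combination step: the bad-vertex events $\{v_{i,j}\in B_i\}$ at different levels are correlated, since the whole chain $v_{0,j}\to v_{1,j}\to\dots\to v_{r,j}$ is determined by $v_{0,j}$, so a naive max-over-levels argument only yields $\max_i b_i$ instead of $\sum_i b_i$ in the catch probability and loses a factor $r$. Recovering the sum should come either from a disjointness argument exploiting the flowering-cut partition (arranging contributions from distinct levels to hit disjoint subsets of $V_0\times[n]$), or from a careful inclusion-exclusion along the chain using that each $\pi_{\varphi_{i-1}}$ is exactly $2$-to-$1$.
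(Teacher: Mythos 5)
Your completeness argument and the commit-phase part of the soundness argument match the paper's. But the step you flag at the end as ``the main obstacle'' is not a loose end to be tidied up later --- it is the central technical content of the paper's soundness proof, and your proposal does not supply it. As you correctly observe, the chain $v_{0,j}\to v_{1,j}\to\dots\to v_{r,j}$ is a deterministic function of $v_{0,j}$, so the events $\{v_{i,j}\in B_i\}$ can all be triggered by the same vertices of $V_0$; a union/max argument gives only $\max_i b_i$, and neither of the two directions you sketch works as stated. A partition of $V_0$ ``by which level is bad'' is not available because several levels can be bad along a single chain, and inclusion--exclusion does not recover the sum $\sum_i b_i$ either. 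Without this step the bound degrades by a factor of $r$ in the exponent, which is not the claimed statement.

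The paper's resolution (Lemma \ref{lemma:disjoint-events}) is a ``last inconsistency'' construction: for each $v_0\in V_0$, let $i(v_0)$ be the largest level $i$ at which $f_i(v_i,\cdot)\neq\Fold[f_{i-1},\alpha_{i-1}](v_i,\cdot)$ along the chain from $v_0$, and define modified words $\fprime_i$ that agree with the \emph{honest} iterated fold $\tilde f_i$ of $f_0$ for $i<i(v_0)$ and with the prover's $f_i$ for $i\geq i(v_0)$. Along each chain at most one level (namely $i(v_0)$) then carries a discrepancy for the modified words, so the events $\dinvalidfold_{i,j}$ are genuinely disjoint and their union is contained in $\bigcup_i\invalidfold_{i,j}$, which is what the verifier's checks detect. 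Crucially, $\fprime_0=f_0$ and $\fprime_r=f_r$, so your telescoping argument --- rerun on the $\fprime_i$ rather than on the $f_i$, with $\Pr(\dinvalidfold_{i,1})=\Delta_V(\fprime_i,\Fold[\fprime_{i-1},\alpha_{i-1}])$ playing the role of $b_i$ --- still yields $\sum_i\Pr(\dinvalidfold_{i,1}\mid\decrease)\geq\delta_0-\delta_r-r\epsilon$, and disjointness converts this sum into a per-repetition catch probability of at least $\frac{\numberofedgeschecked}{n}(\delta_0-r\epsilon)$. Note also that the commit-soundness event $\decrease$ must then be defined for the $\fprime_i$, not the $f_i$; Proposition \ref{proposition:commit-soundness} applies to an arbitrary word, so the union bound $r/(\epsilon\Card{\bbF})$ goes through unchanged. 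Your skeleton is the right one, but until you construct the $\fprime_i$ and verify these three properties (disjointness, containment in $\bigcup_i\invalidfold_{i,j}$, and preservation of the endpoints $f_0$ and $f_r$), the proof is incomplete.
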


The completeness property is straightforward since the Prover can send $f_i=\Fold[f_{i-1},\alpha_{i-1}]$ for $i=1,...,r$ to make the Verifier accept with probability $1$.

\subsection{Proof of soundness}

We prove the soundness of the theorem, stated in vertex distance, in Proposition \ref{proposition:soundness-with-vertex-distance}, using the same strategy as for \cite[Theorem 7.2]{BKS18}.

\begin{lemma}
  \label{lemma:disjoint-events}
  Using the notations of Protocol \ref{protocol:flowering}, let $j\in[\repetitionparameter]$ be fixed.
  Let $\invalidfold_{i,j}$ denote the event ``$\Fold[f_{i-1},\alpha_{i-1}](v_{i,j},\cdot)\neq f_i(v_{i,j},\cdot)$''.
  For $(\fprime_0,...,\fprime_r)\in \prod_{i=0}^rW(\Gamma_i,\bbF)$, let $\dinvalidfold_{i,j}$ denote the event ``$\Fold[\fprime_{i-1},\alpha_{i-1}](v_{i,j},\cdot)\neq \fprime_i(v_{i,j},\cdot)$''.
  There exists $(\fprime_0,...,\fprime_r)\in \prod_{i=0}^rW(\Gamma_i,\bbF)$ such that
  $f'_r=f_r$, and the events $\dinvalidfold_{1,j},...,\dinvalidfold_{r,j}$ are disjoint and $\bigsqcup_{i=1}^r\dinvalidfold_{i,j}\subseteq\bigcup_{i=1}^r\invalidfold_{i,j}$.
\end{lemma}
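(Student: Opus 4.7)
The plan is to choose the $f'_i$'s so that, for each trajectory $j$, the event $\dinvalidfold_{i,j}$ will occur precisely at the largest $i \in [r]$ for which $\invalidfold_{i,j}$ occurs (and at no $i$ if no $\invalidfold_{i,j}$ occurs). Each trajectory admits at most one such maximal index, which will force the $\dinvalidfold_{i,j}$ to be pairwise disjoint, and the required containment $\bigsqcup_i \dinvalidfold_{i,j} \subseteq \bigcup_i \invalidfold_{i,j}$ will hold tautologically since any occurrence of $\dinvalidfold_{i,j}$ will, by construction, coincide with an occurrence of $\invalidfold_{i,j}$.

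I will build the $f'_i$'s by downward recursion on $i$, starting from $f'_r := f_r$. The flexibility I will exploit is that for each $v \in V_{i+1}$, the bijection $\varphi_i : V_{i+1} \to V_i \setminus V_{i+1}$ partitions $V_i$ into disjoint pairs $\{v, \varphi_i(v)\}$, so I can independently choose $f'_i(v, \cdot)$ and $f'_i(\varphi_i(v), \cdot)$ in order to realize any prescribed value for $\Fold[f'_i, \alpha_i](v, \cdot) = f'_i(v, \cdot) + \alpha_i f'_i(\varphi_i(v), \cdot) \in \bbF^n$. Because each projection $\pi_{\varphi_{i'}}$ is deterministic, the trajectory $v \mapsto \pi_{\varphi_{i+1}}(v) \mapsto \cdots$ through $\Gamma_{i+1}, \ldots, \Gamma_r$ is determined by $v$ alone. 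Hence the predicate ``some $\dinvalidfold_{i',j}$ with $i' > i+1$ occurs on a trajectory passing through $v$ at level $i+1$''---call it $B_{>i+1}(v)$---is already determined by the previously fixed $f'_{i+1}, \ldots, f'_r$. I will then set $\Fold[f'_i, \alpha_i](v, \cdot) := f'_{i+1}(v, \cdot)$ unless $\invalidfold_{i+1}$ holds at $v$ and $B_{>i+1}(v)$ is false, in which case I pick any value distinct from $f'_{i+1}(v, \cdot)$.

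Verification is a routine downward induction on $i$: the inductive hypothesis turns $B_{>i+1}(v)$ into ``some $\invalidfold_{i',j}$ with $i' > i+1$ occurs on the trajectory through $v$'', so the above rule makes $\dinvalidfold_{i+1,j}$ hold exactly when $\invalidfold_{i+1,j}$ holds and $i+1$ is maximal with this property on trajectory $j$, closing the induction. I do not foresee any substantive obstacle; the only bookkeeping point is the essentially trivial observation that the fold can indeed be driven to any prescribed target in $\bbF^n$ from the pair $(v, \varphi_i(v))$ --- e.g.\ by setting $f'_i(v, \cdot) := 0$ and solving for $f'_i(\varphi_i(v), \cdot)$ when $\alpha_i \neq 0$, and by setting $f'_i(v, \cdot)$ directly when $\alpha_i = 0$.
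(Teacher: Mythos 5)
Your construction is logically sound for the literal statement: building the $\fprime_i$ top-down from $\fprime_r:=f_r$ and forcing $\Fold[\fprime_i,\alpha_i](v,\cdot)$ to disagree with $\fprime_{i+1}(v,\cdot)$ exactly at the last level where $\invalidfold$ occurs on the trajectory through $v$ does yield disjoint events $\dinvalidfold_{i,j}$ whose union is contained in $\bigcup_i\invalidfold_{i,j}$, and the realizability of any prescribed fold value from the pair $(v,\varphi(v))$ is as trivial as you say. The paper goes the opposite way: it folds $f_0$ honestly into $\tilde f_i:=\Fold[\tilde f_{i-1},\alpha_{i-1}]$, defines $i(v_0)$ as the \emph{last} level at which the prover's $f_i$ disagrees with $\Fold[f_{i-1},\alpha_{i-1}]$ along the trajectory, and splices $\fprime_i:=\tilde f_i$ before that level with $\fprime_i:=f_i$ from it onward; the only possible $\dinvalidfold$ is then at the splice point. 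Both arguments exploit the same ``unique maximal cheating level'' idea.

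The difference is not cosmetic, though. The paper's splicing gives $\fprime_0=f_0$ and, more generally, keeps every $\fprime_i$ equal vertex-by-vertex to either $f_i$ or $\tilde f_i$. This is silently load-bearing in the proof of Proposition \ref{proposition:soundness-with-vertex-distance}: the telescoping sum there starts from $\delta_0=\Delta_V(\fprime_0,C_0)$ and the final bound is stated in terms of $\Delta_V(f_0,C_0)$, which requires $\fprime_0=f_0$. Your $\fprime_0$ is a synthetic object (e.g.\ zero on half the vertices of $V_0$) with no controlled distance to $C_0$, so your tuple, while satisfying the lemma as written, cannot be fed into the soundness argument that the lemma exists to serve. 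If you keep your top-down approach you would need to add and prove the extra conclusion $\fprime_0=f_0$, which your construction does not provide. A secondary, shared imprecision: both your $\fprime_i$ and the paper's are defined vertex-wise and need not satisfy the edge-consistency condition defining $W(\Gamma_i,\bbF)$; the paper glosses over this too, but your per-pair arbitrary assignments violate it more severely.
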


\begin{proof}
  Define recursively $\tilde{f}_i$ for $i=0,..,r$ by $\tilde{f}_0:=f_0$ and for $i>0$, $\tilde{f}_i:=\Fold[\tilde{f}_{i-1},\alpha_{i-1}]$.
  We define the $\fprime_i$ as follows.
  For $i=0$, $\fprime_0=f_0$.
  Let $v_0\in V_0$.
  Denote $(v_1,...,v_r)$ the sequence such that for $i\in[r]$, $v_{i}=\pi_{\varphi_{i-1}}(v_{i-1})$.
  Denote $i(v_0):=\max(\{i\in[r]\mid f_{i}(v_{i},\cdot)\neq\Fold[f_{i-1},\alpha_{i-1}](v_{i},\cdot)\}\cup\{0\})$.
  Let $\ell\in[n]$.
  For $i<i(v_0)$ define $\fprime_i(v_i,\ell):=\tilde{f}_i(v_i,\ell)$, and for $i\geq i(v_0)$ define $\fprime_i(v_i,\ell):=f_i(v_i,\ell)$.
  Let $i_0:=i(v_{0,j})$.
  If $i_0=0$ then for any $i\in[r]$, $\dinvalidfold_{i,j}$ does not hold.
  If $i_0>0$ then by construction, for $i\in[r]\setminus\{i_0\}$, $\dinvalidfold_{i,j}$ does not hold.
  Thus the $(\dinvalidfold_{i,j})_{i\in[r]}$ are disjoint.
  Moreover, if the event $\bigsqcup_{i=1}^r\dinvalidfold_{i,j}$ holds,
  then $i_0>0$
  i.e. the event $\bigcup_{i=1}^r\invalidfold_{i,j}$ holds.
\end{proof}

\begin{proposition}[Query soundness]
  \label{proposition:soundness-with-vertex-distance}
  Let $\epsilon>0$ and $f_0\in W(\Gamma_0,\bbF)$. After running Protocol \ref{protocol:flowering} with $\repetitionparameter$ repetitions of the query phase by checking $\numberofedgeschecked$ edges, the Verifier accepts with probability at most
  \[\frac{r}{\epsilon\Card{\bbF}}+\left(1-\frac{\numberofedgeschecked}{n}(\Delta_V(f_0,\calC[\Gamma_0,k])-r\epsilon)\right)^\repetitionparameter\text,\]
  where the probability is taken over the its internal randomness.
\end{proposition}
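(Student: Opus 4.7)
The plan is to follow the FRI soundness strategy of \cite[Theorem 7.2]{BKS18}, adapted to our setting using Proposition \ref{proposition:commit-soundness} and Lemma \ref{lemma:disjoint-events}. Write $\delta_i := \Delta_V(f_i,C_i)$ where $(f_i)$ denotes the (possibly cheating) prover's messages. If the verifier accepts, the final check $f_r\in\RS[n,k]=C_r$ forces $\delta_r=0$. For each round $i\in\{1,\ldots,r\}$, applying Proposition \ref{proposition:commit-soundness} to the fixed word $f_{i-1}$ with the fresh challenge $\alpha_{i-1}$ bounds the probability of the bad commit event $B_i:=\{\Delta_V(\Fold[f_{i-1},\alpha_{i-1}],C_i)<\delta_{i-1}-\epsilon\}$ by $1/(\epsilon\Card{\bbF})$. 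A union bound over the $r$ rounds yields $\Pr[\bigcup_i B_i]\le r/(\epsilon\Card{\bbF})$, which is the first term of the claimed soundness.

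Next, fix a single repetition $j$ and condition on the complement of $\bigcup_iB_i$. Invoke Lemma \ref{lemma:disjoint-events} to obtain $f'_0,\ldots,f'_r$ with $f'_r=f_r$ and disjoint events $\dinvalidfold_{i,j}$ satisfying $\bigsqcup_i\dinvalidfold_{i,j}\subseteq\bigcup_i\invalidfold_{i,j}$. By the construction in the proof of that lemma, $\dinvalidfold_{i_0,j}$ can only hold at the unique index $i_0=i(v_{0,j})>0$, and it then implies $\invalidfold_{i_0,j}$: the vectors $f_{i_0}(v_{i_0,j},\cdot)$ and $\Fold[f_{i_0-1},\alpha_{i_0-1}](v_{i_0,j},\cdot)$ disagree on at least one of the $n$ coordinates. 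Since $I_j$ is a uniformly random subset of $[n]$ of size $\numberofedgeschecked$ independent of $v_{0,j}$, the $i_0$-th verifier check catches the prover with probability at least $\numberofedgeschecked/n$ conditional on $\dinvalidfold_{i_0,j}$. Combined with the disjointness of the $\dinvalidfold_{i,j}$, this gives
\[\Pr[\text{repetition }j\text{ catches}]\ge\frac{\numberofedgeschecked}{n}\sum_{i=1}^r\Pr[\dinvalidfold_{i,j}].\]

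The crux — and the step I expect to be the main technical obstacle — is to show $\sum_i\Pr[\dinvalidfold_{i,j}]\ge\Delta_V(f_0,C_0)-r\epsilon$ in the good event. The strategy is a telescoping on the chain $(f'_i)$: by the triangle inequality applied to $(f'_{i-1},\Fold[f'_{i-1},\alpha_{i-1}],C_i)$ one obtains $\Delta_V(f'_{i-1},C_{i-1})-\Delta_V(f'_i,C_i)\le\epsilon+\Delta_V(f'_i,\Fold[f'_{i-1},\alpha_{i-1}])$ as soon as the commit inequality $\Delta_V(\Fold[f'_{i-1},\alpha_{i-1}],C_i)\ge\Delta_V(f'_{i-1},C_{i-1})-\epsilon$ holds; summing from $i=1$ to $r$ and using $f'_0=f_0$, $f'_r=f_r\in C_r$, together with the identity $\Pr[\dinvalidfold_{i,j}]=\Delta_V(f'_i,\Fold[f'_{i-1},\alpha_{i-1}])$ (because $v_{i,j}$ is uniform on $V_i$ when $v_{0,j}$ is uniform on $V_0$), yields exactly the required lower bound. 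The subtle point is that $f'_{i-1}$, as constructed in Lemma \ref{lemma:disjoint-events}, depends on $\alpha_{i-1}$ via $i(v_0)$, so Proposition \ref{proposition:commit-soundness} cannot be invoked directly on $f'_{i-1}$ with $\alpha_{i-1}$ fresh; resolving this requires examining the few possible values of $f'_{i-1}$ — essentially $\tilde f_{i-1}$ or $f_{i-1}$, both of which are independent of $\alpha_{i-1}$ — and checking that the commit inequality is already implied by the complement of $\bigcup_iB_i$. Once this is established, the $\repetitionparameter$ repetitions use independent random $(v_{0,j},I_j)$, so the per-repetition failure probabilities multiply and give $(1-(\numberofedgeschecked/n)(\Delta_V(f_0,C_0)-r\epsilon))^\repetitionparameter$; combining with the commit error and taking the minimum over $\epsilon>0$ yields the claimed bound.
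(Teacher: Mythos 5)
Your proposal follows essentially the same route as the paper's proof: the same appeal to Lemma \ref{lemma:disjoint-events}, the same conditioning on a good commit event bounded by $r/(\epsilon\Card{\bbF})$ via Proposition \ref{proposition:commit-soundness}, the same $\numberofedgeschecked/n$ catch probability via $\bigsqcup_i\dinvalidfold_{i,j}\subseteq\bigcup_i\invalidfold_{i,j}$, and the same telescoping with the identity $\Pr[\dinvalidfold_{i,j}]=\Delta_V(\fprime_i,\Fold[\fprime_{i-1},\alpha_{i-1}])$ and the endpoint facts $\fprime_0=f_0$, $\fprime_r=f_r\in C_r$. The ``subtle point'' you flag --- that $\fprime_{i-1}$ depends on the challenges, so Proposition \ref{proposition:commit-soundness} does not apply to it verbatim --- is genuine, but the paper's own proof defines the event $\decrease$ directly in terms of the $\fprime_i$ and invokes Proposition \ref{proposition:commit-soundness} on them without addressing this dependence, so your treatment is if anything more careful than (and otherwise identical to) the paper's.
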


\begin{proof}

  If $f_r\notin C_r$ then the Verifier rejects with probability $1$. Therefore in the following we assume that
  \begin{equation}
    \label{eq:fr-in-code}
    \Delta_V(f_r,C_r)=0\text.
  \end{equation}
  Let $\fprime_0,...,\fprime_r$ be given by Lemma \ref{lemma:disjoint-events}.
  For $i\in[r]$ and $j\in[\repetitionparameter]$ denote
  $\reject_{i,j}$ the event ``$\exists \ell\in I_j$ such that $\Fold[f_{i-1},\alpha_{i-1}](v_{i,j},\ell)\neq f_i(v_{i,j},\ell)$''.
  Denote $\decrease$ the event ``$\forall i\in[r], \Delta_V(\Fold[\fprime_{i-1},\alpha_{i-1}],C_i)\geq\Delta_V(\fprime_{i-1},C_{i-1})-\epsilon$''.
  Then the event ``\calV{} accepts'' is $\bigcap_{j=1}^\repetitionparameter\bigcap_{i=1}^r\overline{\reject_{i,j}}$.

  By the law of total probability, the probability that \calV{} accepts is at most $\Pr\left(\overline{\decrease}\right)+\Pr\left(\bigcap_{i,j}\overline{\reject_{i,j}}\mid \decrease\right)$.
  By Proposition \ref{proposition:commit-soundness}, $\Pr\left(\overline{\decrease}\right)\leq\sum_{i=1}^r\Pr\big[\Delta_V(\Fold[\fprime_{i-1},\alpha_{i-1}],C_i)<\Delta_V(\fprime_{i-1},C_{i-1})-\epsilon\big]\leq\frac{r}{\epsilon\Card{\bbF}}$.
  By independence of the repetitions of the query phase,
  \begin{equation}
    \label{eq:independence-repetition}
    \Pr\left(\bigcap_{i,j}\overline{\reject_{i,j}}\mid \decrease\right)=\prod_{j=1}^\repetitionparameter\left(1-\Pr\left(\bigcup_{i\in[r]}\reject_{i,j}\mid \decrease\right)\right)\text.
  \end{equation}
  Since all $\Pr(\bigcup_{i=1}^r\reject_{i,j}\mid \decrease)$ for $j\in[\repetitionparameter]$ are equal, we consider the case $j=1$.
  Take the notations $\invalidfold_{i,1}$ and $\dinvalidfold_{i,1}$ of Lemma \ref{lemma:disjoint-events}.
  Then $\Pr\left(\bigcup_{i\in[r]}\reject_{i,1}\mid \decrease\cap\bigcup_{i\in[r]}\invalidfold_{i,1}\right)\geq\frac{\numberofedgeschecked}{n}$
  and therefore,
  \begin{align*}
    \Pr\left(\bigcup_{i\in[r]}\reject_{i,1}\mid\decrease\right)
    &\geq\frac{\numberofedgeschecked}{n}\Pr\left(\bigcup_{i\in[r]}\invalidfold_{i,1}\mid\decrease\right)\\
    &\geq\frac{\numberofedgeschecked}{n}\Pr\left(\bigsqcup_{i\in[r]}\dinvalidfold_{i,1}\mid\decrease\right)\text.
  \end{align*}
  Hence by (\ref{eq:independence-repetition}),
  \begin{equation}
    \label{eq:query-soundness-in-disjoint-events}
    \Pr\left(\bigcap_{i,j}\overline{\reject_{i,j}}\mid \decrease\right)\leq\left(1-\frac{t}{n}\sum_{i=1}^r\Pr(\dinvalidfold_{i,1}\mid A)\right)^\repetitionparameter\text.
  \end{equation}
  Assuming $\decrease$ holds, by denoting $\delta_i:=\Delta_V(\fprime_i,C_i)$, by the triangle inequality we have that
  \begin{align*}
    \delta_i
    &\geq\Delta_V(\Fold[\fprime_{i-1},\alpha_{i-1}],C_i)-\Delta_V(\fprime_i,\Fold[\fprime_{i-1},\alpha_{i-1}])\\
    &\geq \delta_{i-1}-\epsilon-\Pr(\dinvalidfold_{i,1})\text.
  \end{align*}
  Thus $\Pr(\dinvalidfold_{i,1}\mid \decrease)\geq\delta_{i-1}-\delta_i-\epsilon$ and by telescoping,
  \begin{equation}
    \label{eq:telescopic-sum}
    \sum_{i=1}^r\Pr(\dinvalidfold_{i,1}\mid A)\geq \delta_0-\delta_r-r\epsilon\text.
  \end{equation}
  By construction, $\fprime_r=f_r$, therefore by (\ref{eq:fr-in-code}), (\ref{eq:query-soundness-in-disjoint-events}) and (\ref{eq:telescopic-sum}) we get the result.
\end{proof}

\noindent
Theorem \ref{thm:protocol-parameters} is then a corollary of Propositions \ref{proposition:comparison-distances} and \ref{proposition:soundness-with-vertex-distance}.

\section{Cayley multigraph over $(\bbF_2^r,+)$}
\label{sec:cayley}

We instantiate the Protocol \ref{protocol:flowering} by defining a blossoming graph sequence built from Cayley graphs \cite{Cay78} over the additive group $\bbF_2^r$, and we prove a bound for their minimal distance. \Hugo{Ça fait sens de citer ça ??}

\begin{definition}[Cayley \RIM]
  Let $G$ be a finite group and $S=\{s_1,...,s_n\}\subseteq G$ be symmetric. We define the $n$-\RIM{} $\Cay(G,S)=(V,E)$ by $V=G$ and $E:(v,\ell)\mapsto v\cdot s_\ell$.
\end{definition}

\begin{definition}[Blossoming Cayley multigraph sequence]
  Let $G=\bbF_2^r$ and $S\subseteq G$.
  We define the \emph{blossoming graph sequence} $\Gamma_0,...,\Gamma_r$ as follows. $\Gamma_0:=\Cay(G,S)$ and for $i>0$, define $V_i:=\{0\}^i\times\bbF_2^{r-i}$ and
  \[
    \varphi_i:\underset{i-1\text{ zeroes}}{(\underbrace{0,...,0},}1,g_{i+1},...,g_r)\mapsto(\underset{i\text{ zeroes}}{\underbrace{0,...,0}},g_{i+1},...,g_r)
  \]
  and $\Gamma_i:=\Cut[\Gamma_{i-1},V_i]$.
\end{definition}

In the following, we assume that $r\leq n$ and there exists $d\leq r+1$ such that there exists a binary code of parameters $[n,n-r,d]_2$. Then by taking $S$ the set of columns of a parity check matrix of that code, we obtain Lemma \ref{lemma:generating-set-independent-subsets}.
This construction is called a coset-graph \cite{BZ22}.

\begin{lemma}
  \label{lemma:generating-set-independent-subsets}
  There exists $S\subseteq\bbF_2^r$ such that $\Card{S}=n$, $\Span(S)=\bbF_2^r$ and any subset of $d-1$ vectors of $S$ are linearly independent.
\end{lemma}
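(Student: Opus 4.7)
The plan is to take $S$ to consist of the columns of a parity check matrix of the given $[n,n-r,d]_2$ binary code, exactly as suggested in the paragraph preceding the lemma, and to verify the three required properties via the standard dictionary between minimum distance and linear relations among columns.

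First I would fix a parity check matrix $H \in \bbF_2^{r \times n}$ of the code and let $h_1,\ldots,h_n \in \bbF_2^r$ be its columns; set $S := \{h_1,\ldots,h_n\}$. The spanning property is immediate: since the code has dimension $n-r$, $H$ has rank $r$, so its columns span $\bbF_2^r$, i.e.\ $\Span(S) = \bbF_2^r$.

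For the linear independence property, I would argue by contraposition. Assuming some $d-1$ columns $\{h_i : i \in I\}$ are linearly dependent over $\bbF_2$, the fact that every $\bbF_2$-linear combination is a subset sum yields a nonempty $J \subseteq I$ with $\sum_{i \in J} h_i = 0$ and $|J| \leq d - 1$. The indicator vector $e_J \in \bbF_2^n$ of $J$ then satisfies $H e_J = 0$ and has Hamming weight $|J|$, so it is a nonzero codeword of weight at most $d - 1$, contradicting the minimum distance of the code being $d$.

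Finally, specialising the previous point to subsets of size $2$ (legitimate as soon as $d \geq 3$, which is the regime of interest) shows that the columns are pairwise distinct, so $\Card{S} = n$. The whole argument is really the well-known equivalence between the minimum distance of a linear code and the maximum size of a linearly independent family of columns of its parity check matrix, so no step stands out as a genuine obstacle; the only point to double-check is the degenerate $d \leq 2$ case, for which one must select a parity check matrix with distinct nonzero columns, but this is not relevant for the quantitative results of the paper.
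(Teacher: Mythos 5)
Your proposal is correct and is exactly the construction the paper intends (the paper itself only gives the one-line hint "take $S$ to be the columns of a parity check matrix of the $[n,n-r,d]_2$ code" and omits the verification): the rank-$r$ argument for spanning, the subset-sum/low-weight-codeword argument for independence of any $d-1$ columns, and the $d\geq 3$ remark for distinctness are all the standard and intended steps.
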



\noindent
With $G=(\bbF_2^r,+)$, $\Card{S}=n$ and $k\leq n$, remark that the length of $\calC[\Cay(G,S),k]$ is thus $n2^{r-1}$ and its rate is at least $\frac{2k}{n}-1$.

\begin{proposition}[Lower bound on the minimal distance]
  \label{proposition:lower-bound-minimal-distance}
  Assume that $n-k+1=d-1$.
  If $S$ is given by Lemma \ref{lemma:generating-set-independent-subsets}, with $\Gamma=(V,E):=\Cay(\bbF_2^r,S)$, then
  \begin{equation*}
    \label{eq:lower-bound-minimal-distance}
    \Delta_H(\calC[\Gamma,k])\geq 2^{d-r-2}\!\left(1-\frac{k-1}{n}\right)\text.
  \end{equation*}
\end{proposition}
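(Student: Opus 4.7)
The approach combines two ingredients: a per-vertex weight of at least $d-1$ coming from the MDS property of $\RS[n,k]$, and a lower bound of $2^{d-2}$ on the number of vertices carrying a non-zero local view. By Lemma~\ref{lemma:generating-set-independent-subsets}, any two elements of $S$ are linearly independent, so $0\notin S$ and $\Gamma$ has no loops; hence $\Card{\edges{E}}=n\cdot 2^{r-1}$ and every edge has two distinct endpoints. Let $f\in\calC[\Gamma,k]$ be a non-zero codeword and set $T:=\{v\in V:f(v,\cdot)\neq 0\}$. For each $v\in T$ the local view $f(v,\cdot)\in\RS[n,k]$ is non-zero, hence of weight at least $n-k+1=d-1$ by the MDS property. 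Summing and using that each non-zero edge is incident to at most two vertices of $T$, the number of non-zero edges is at least $|T|(d-1)/2$.

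The core step is $|T|\geq 2^{d-2}$. The edge identification $f(v,\ell)=f(v+s_\ell,\ell)$ forces $v+s_\ell\in T$ whenever $v\in T$ and $\ell\in L(v):=\{\ell:f(v,\ell)\neq 0\}$. By translation-invariance assume $v_0:=0\in T$ and pick $d-1$ indices $\ell_1,\ldots,\ell_{d-1}\in L(v_0)$; by Lemma~\ref{lemma:generating-set-independent-subsets} the vectors $s_{\ell_i}$ are linearly independent and span a $(d-1)$-dimensional subspace $H\subseteq\bbF_2^r$ whose $2^{d-1}$ elements are the cube vertices $v_I:=\sum_{i\in I}s_{\ell_i}$ for $I\subseteq[d-1]$. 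Writing $P_v(X):=f(v,X)$, a polynomial of degree $<k$, the identifications imply that $(X-x_{\ell_i})\mid P_{v+s_{\ell_i}}-P_v$; iterating yields, for every $I\subseteq[d-1]$, a decomposition $P_{v_I}=\sum_{J\subseteq I}\prod_{j\in J}(X-x_{\ell_j})\cdot g_J$ with $g_\emptyset=P_{v_0}$ and $\deg g_J<k-|J|$. Since $\ell_i\in L(v_0)$, $g_\emptyset(x_{\ell_i})\neq 0$ for every $i$; combining this non-vanishing with the degree bounds on the $g_J$'s, one shows that at most $2^{d-2}$ of the $2^{d-1}$ polynomials $P_{v_I}$ can be identically zero, yielding $|T\cap(v_0+H)|\geq 2^{d-2}$.

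Combining, the number of non-zero edges is at least $2^{d-3}(d-1)$, and dividing by $\Card{\edges{E}}=n\cdot 2^{r-1}$ gives $\Delta_H(f,0)\geq 2^{d-r-2}(d-1)/n=2^{d-r-2}(1-(k-1)/n)$. The hardest part is the support bound $|T|\geq 2^{d-2}$: the first-order cube vertices $v_\emptyset,v_{\{1\}},\ldots,v_{\{d-1\}}$ give only $d$ elements of $T$, which exceeds $2^{d-2}$ merely for $d\leq 4$; for larger $d$ one must rule out simultaneous vanishing of too many higher-order $P_{v_I}$, which requires propagating the divisibility relations $P_{v_I}(x_{\ell_m})=P_{v_{I\triangle\{m\}}}(x_{\ell_m})$ across the whole cube $v_0+H$ and using the degree constraints $\deg g_J<k-|J|$ inductively.
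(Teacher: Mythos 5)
Your outer structure is the same as the paper's: show that the set $T$ of vertices with non-zero local view has size at least $2^{d-2}$, observe that each such vertex contributes at least $n-k+1=d-1$ non-zero edges each shared by at most two vertices, and divide by $\Card{\edges{E}}=n2^{r-1}$; that arithmetic is correct and matches the paper. But the core step $\Card{T}\geq 2^{d-2}$ is not actually proved. You reduce it to the claim that at most half of the $2^{d-1}$ polynomials $P_{v_I}$ on the cube $v_0+\Span(s_{\ell_1},\dots,s_{\ell_{d-1}})$ can vanish identically, justify only the decomposition $P_{v_I}=\sum_{J\subseteq I}\prod_{j\in J}(X-x_{\ell_j})g_J$ with $\deg g_J<k-|J|$, and then write ``one shows that'' and later concede this is the hardest part and must still be carried out. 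Nothing in the decomposition or in the non-vanishing of $g_\emptyset$ at the $x_{\ell_i}$ obviously forces half the cube to be in $T$ (the degree constraints $\deg g_J<k-|J|$ only annihilate $g_J$ for $|J|\geq k$, which is no restriction when $k>d-1$), so the key inequality is left as an assertion. Worse, your cube claim is strictly stronger than what the paper needs: it confines the $2^{d-2}$ non-zero vertices to a single $(d-1)$-dimensional coset determined by an arbitrary choice of $d-1$ indices in $L(v_0)$, and it is not clear this localized version is even true.

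The paper avoids this entirely with an elementary expansion argument: letting $V'_i$ be the vertices at Cayley distance $i$ from $0$ with non-zero local view, the unique-representation property from Lemma~\ref{lemma:generating-set-independent-subsets} shows each $v\in V'_i$ has exactly $i$ neighbors at distance $i-1$ and hence at least $n-k-i+1$ non-zero neighbors at distance $i+1$; a double count then gives $\Card{V'_i}\geq\binom{n-k+1}{i}$ by induction, and summing over $i\leq\lfloor(n-k+1)/2\rfloor$ yields $\Card{T}\geq\frac12\cdot 2^{n-k+1}=2^{d-2}$. Note that these vertices are spread over balls in all directions of $S$, not inside one $(d-1)$-dimensional subcube. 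To complete your proof you would either need to supply a genuine argument for your cube claim or replace it with a counting argument of this kind.
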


\begin{proof}
  Let $f\in\calC[\Gamma,k]$ be non null, and suppose w.l.o.g. that $f(0,\cdot)\neq 0$.
  Denote $t:=\lfloor\frac{d-1}{2}\rfloor$. Denote $V_0=\{0\}$ and, for $i\in[t]$, denote
  $V_i:=\left\{\sum_{s\in S_v}s\mid S_v\subseteq S, \Card{S}=i\right\}$
  the set of vertices at distance $i$ from $0$, and $V'_i:=\{v\in V_i\mid f(v,\cdot)\neq 0\}$.
  Remark that by Lemma \ref{lemma:generating-set-independent-subsets}, for $i\in[t]$ and $v\in V_i$, there is a unique set $S_v\subseteq S$ of size $\leq t$ such that $v=\sum_{s\in S_v}s$,
  because two distinct sets $S_v$ and $S'_v$ would create a linear dependancy of less than $d$ columns on $S$.

  For $i\in[t-1]$, by unicity of the decomposition, a vertex $v\in V'_i$ has $i$ neighbors in $V_{i-1}$ and no neighbors in $V_i$, therefore it has $n-i$ neighbors in $V_{i+1}$.
  Since $f(v,\cdot)\neq 0$, $v$ has at least $n-k+1$ non zero outgoing edges, $v$ has at least $n-k-i+1$ neighbors in $V'_{i+1}$.

  We prove by induction on $i=0,...,t$ that $\Card{V'_i}\geq\binom{n-k+1}{i}$.
  For $i=0$, $\Card{V'_i}=1$.
  Let $i\in[t]$.
  For $v\in V'_i$, denote $A_v:=E(v,[n])\cap V'_{i-1}$ the neighbors of $v$ in $V'_{i-1}$, and for $v'\in V'_{i-1}$, denote $B_{v'}:=E(v',[n])\cap V'_i$ the neighbors of $v'$ in $V'_i$, and denote $N(v,v'):=1$ if $v$ and $v'$ are neighbors, and $0$ otherwise.
  Then since any $v\in V'_i$ has at most $i$ neighbors in $V'_{i-1}$,
  \begin{equation}
    \label{eq:upper-bound-neighbors}
    \sum_{v\in V'_i}\sum_{v'\in V'_{i-1}}N(v,v')=\sum_{v\in V'_i}\Card{A_v}\leq i\Card{V'_i}\text,
  \end{equation}
  and since any $v'\in V'_{i-1}$ has at least $n-k-i+2$ neighbors in $V'_i$,
  \begin{equation}
    \label{eq:lower-bound-neighbors}
    \sum_{v'\in V'_{i-1}}\sum_{v\in V'_i}N(v,v')=\sum_{v'\in V'_{i-1}}\Card{B_{v'}}\geq(n-k-i+2)\Card{V'_{i-1}}\text.
  \end{equation}
  Combining (\ref{eq:upper-bound-neighbors}), (\ref{eq:lower-bound-neighbors}) and the induction, we obtain
  \begin{equation*}
    \Card{V'_i}
    \geq \frac{n-k-i+2}{i}\binom{n-k+1}{i-1}
=\binom{n-k+1}{i}\text,
  \end{equation*}
  which concludes the induction.

  Therefore, because $t=\lfloor\frac{n-k+1}{2}\rfloor$, there are at least
  \[\sum_{i=0}^{t}\Card{V'_i}\geq\sum_{i=0}^{\lfloor\frac{n-k+1}{2}\rfloor}\binom{n-k+1}{i}=\vadisparaitre{\frac12\sum_{i=0}^{n-k+1}\binom{n-k+1}{i}=}2^{n-k}\]
  vertices corresponding to non zero local codewords. Hence $w_H(f)\geq\frac{n-k+1}{2}\cdot 2^{n-k}=(n-k+1)2^{d-3}$. Thus $\Delta_H(f)\geq 2^{d-r-2}\!\left(1-\frac{k-1}{n}\right)$.
\end{proof}

\begin{proposition}[Upper bound on the minimal distance]
  \label{proposition:upper-bound-minimal-distance}
  Assume that $n-k+1=d-1$.
  If $S$ is given by Lemma \ref{lemma:generating-set-independent-subsets}, with $\Gamma=\Cay(G,S)$, we have
  \begin{equation*}
    \label{eq:bad-upper-bound-minimal-distance}
    \Delta_H(\calC[\Gamma,k])\leq 2^{d-r-1}\left(1-\frac{k-1}{n}\right)\text.
  \end{equation*}
\end{proposition}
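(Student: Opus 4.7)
The plan is to exhibit a concrete nonzero codeword $f \in \calC[\Gamma,k]$ whose relative Hamming weight matches the claimed bound. The construction exploits the minimum-weight relation inside $S$: since $S$ consists of the columns of a parity check matrix of an $[n, n-r, d]_2$ code, there exist indices $\ell_1, \ldots, \ell_d$ with $s_{\ell_1} + \cdots + s_{\ell_d} = 0$. By Lemma \ref{lemma:generating-set-independent-subsets}, any $d-1$ of these are linearly independent, so the subgroup $H := \Span(s_{\ell_1}, \ldots, s_{\ell_d})$ of $\bbF_2^r$ has dimension exactly $d-1$; in particular $\Card{H} = 2^{d-1}$ and none of the $s_{\ell_j}$ equals $0$.

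Next I would produce a polynomial $P \in \bbF[X]$ of degree $< k$ vanishing on $\{x_j : j \notin \{\ell_1, \ldots, \ell_d\}\} \cup \{x_{\ell_1}\}$, that is on $n - d + 1 = k - 1$ prescribed points, using the assumption $n - k + 1 = d - 1$. Since $k - 1$ linear conditions are imposed on the $k$-dimensional space of polynomials of degree $< k$, such a $P$ exists and is unique up to scalar. Any nonzero such $P$ is automatically nonzero on the remaining $d - 1$ points $x_{\ell_2}, \ldots, x_{\ell_d}$, for otherwise it would have $k$ roots while having degree $< k$.

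I would then define $f(v, \ell) := P(x_\ell)$ when $v \in H$ and $f(v, \ell) := 0$ otherwise, and check $f \in \calC[\Gamma, k] \setminus \{0\}$. Each local view is either the codeword $P$ evaluated on all $x_\ell$'s or the zero vector, so $f(v, \cdot) \in \RS[n, k]$ for every $v$. For edge consistency $f(v, \ell) = f(v + s_\ell, \ell)$, one splits on whether $s_\ell \in H$: if so, $v$ and $v + s_\ell$ lie in the same coset of $H$ and both sides coincide; if not, then in particular $\ell \notin \{\ell_1, \ldots, \ell_d\}$, so $P(x_\ell) = 0$ and both sides vanish regardless of which coset contains $v$.

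It finally remains to count $\Delta_H(f)$: the nonzero positions are the pairs $(v, \ell)$ with $v \in H$ and $\ell \in \{\ell_2, \ldots, \ell_d\}$, and since no $s_{\ell_j}$ vanishes no loop appears so each edge equivalence class has size $2$, yielding $(d-1) \cdot 2^{d-2}$ nonzero edges out of $\Card{\edges{E}} = n \cdot 2^{r-1}$. Dividing gives $\Delta_H(f) = 2^{d-r-1}(d-1)/n = 2^{d-r-1}(1 - (k-1)/n)$, whence $\Delta_H(\calC[\Gamma,k])$ is at most this value. The only delicate point is the existence and prescribed support pattern of $P$, which is exactly where the hypothesis $n - k + 1 = d - 1$ enters.
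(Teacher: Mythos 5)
Your construction is essentially the paper's: a nonzero codeword supported on a $(d-1)$-dimensional subgroup $H$ spanned by elements of $S$, with each nonzero local view equal to a fixed degree-$(k-1)$ polynomial vanishing at every evaluation point $x_\ell$ for which $s_\ell\notin H$, and the same weight count $(d-1)2^{d-2}$ out of $n2^{r-1}$ edges. The only (inessential) divergence is your starting point: Lemma~\ref{lemma:generating-set-independent-subsets} guarantees that every $(d-1)$-subset of $S$ is independent but not that some $d$-subset is dependent, so the relation $s_{\ell_1}+\cdots+s_{\ell_d}=0$ is not available from the stated hypotheses; the paper sidesteps this by taking $H=\Span(S')$ for an arbitrary $(d-1)$-subset $S'\subseteq S$ and letting the polynomial vanish on the $k-1$ complementary points, and your argument goes through verbatim with that substitution.
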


\begin{proof}
  Let $S'=\{s_1,...,s_{n-k+1}\}\subseteq S$.
  Let $L$ be the degree $k-1$ polynomial such that $L(x_{n-k+1})=1$ and for $\ell=n-k+2,...,n, L(x_\ell)=0$.
  Define $f\in W(\Gamma,\bbF)$ as follows.
  For $v\in G$ and $\ell\in[n]$, $f(v,\ell):=L(x_\ell)$ if $v\in\Span(S')$ and $f(v,\ell):=0$ otherwise.
  We prove that $\forall v\in G,\ell\in[n],f(v,\ell)=f(E(v,\ell),\ell)$. For $v\in G$ and $\ell\in[n]$, if $v,E(v,\ell)\in\Span(S')$, then $f(v,\ell)=f(E(v,\ell),\ell)$, and otherwise $f(v,\ell)=0=f(E(v,\ell),\ell)$. Therefore $f$ is well-defined.
  Moreover, since $\deg L=k-1$, $f\in \calC[\Gamma,k]$.
  By Lemma \ref{lemma:generating-set-independent-subsets}, since $n-k+1=d-1$, $\Span(S')$ has dimension $d-1$.
  Thus $w_H(f)=\frac12(n-k+1)2^{d-1}$ and $\Delta_H(\calC[\Gamma,k])\leq\frac{w_H(f)}{n2^{r-1}}=2^{d-r-1}\!\left(1-\frac{k-1}{n}\right)$.
\end{proof}

\noindent
Since there are no \nouveau{asymptotically good sequences of binary codes very close to be MDS}, $d$ will be asymptotically much smaller than $r$, and hence the minimal distance computed in Propositions \ref{proposition:lower-bound-minimal-distance} and \ref{proposition:upper-bound-minimal-distance} tends to zero when the length of the code tends to infinity.

\section{Conclusion}

This paper introduced a new IOPP protocol designed for codes on graphs. It achieves practical complexity, and soundness competing with used IOPP for Reed-Solomon codes.
Future research will focus on generalizing the cuts to more than two subsets and allowing multiple classes of equivalence of cut-graphs.

However, without arithmetization, it cannot provide new SNARK constructions.
Since our codes are built using Reed-Solomon codes as base codes, it may enable efficient arithmetizations.
Indeed, among the main arithmetization techniques, the PlonK variants \cite{GWZC19} and R1CS \cite{BCRSVW19} write the computation to be verified as an arithmetic circuit in which each gate represents a polynomial.
Then, writing that circuit as a De Bruijn graph \cite{Spi95} makes it regular and probably suitable for our protocol.

\section*{Acknowledgments}
We thank
Daniel Augot for his advices, guidance and proof-readings,
Jade Nardi and Christophe Levrat for the fruitful discussions that led to studying graph folding,
and Clément Chivet for his help with some mathematical tricks.

\bibliographystyle{alpha}
\bibliography{biblio}

\Hugo{Faut aussi vérifier que la biblio est bonne.}

\vfill

\end{document}